\documentclass[ejsv2, noshowframe]{imsart}

\RequirePackage[authoryear]{natbib}
\PassOptionsToPackage{hyphens}{url}
\RequirePackage[colorlinks,citecolor=blue,urlcolor=blue]{hyperref}

\usepackage[utf8]{inputenc}
\usepackage[T1]{fontenc}
\usepackage{amsmath,amsthm}

\usepackage{amssymb}
\usepackage{algorithm}
\usepackage{algpseudocode}
\usepackage{float}
\usepackage{graphicx}
\graphicspath{{../experiment/core/out/}}
\usepackage{booktabs}
\usepackage{multirow}
\usepackage{xcolor}
\usepackage{placeins}

\startlocaldefs
\theoremstyle{plain}
\newtheorem{theorem}{Theorem}
\newtheorem{proposition}{Proposition}

\theoremstyle{definition}

\theoremstyle{remark}

\newcommand{\E}{\mathbb{E}}
\newcommand{\Prob}{\mathbb{P}}

\newcommand{\Unif}{\mathrm{Uniform}}
\DeclareMathOperator*{\argmax}{arg\,max}

\newcommand{\expEpochs}{500}

\newcommand{\expNtrain}{10{,}000}
\newcommand{\expNstable}{2{,}500}
\newcommand{\expNpost}{2{,}500}
\newcommand{\expAlpha}{0.05}
\newcommand{\expNbins}{100}
\newcommand{\expNtrials}{10{,}000}

\newcommand{\resPMGRATPR}{96.2\%}
\newcommand{\resPMGRAFPR}{3.8\%}
\newcommand{\resPMGRADelay}{77}
\newcommand{\resPMGRACPErr}{1.2}
\newcommand{\resPMGSGTPR}{96.2\%}

\newcommand{\resPMGSGDelay}{189}
\newcommand{\resPMGSGCPErr}{6.8}
\newcommand{\resPMLEATPR}{96.2\%}

\newcommand{\resPMLEADelay}{1919}
\newcommand{\resPMLEACPErr}{1839.3}

\newcommand{\resADGRATPR}{99.1\%}
\newcommand{\resADGRAFPR}{0.9\%}
\newcommand{\resADGRADelay}{27}
\newcommand{\resADGSGTPR}{99.1\%}

\newcommand{\resADGSGDelay}{27}
\newcommand{\resADLEATPR}{99.1\%}

\newcommand{\resADLEADelay}{115}

\newcommand{\resDDMGRATPR}{90.8\%}
\newcommand{\resDDMGRAFPR}{9.2\%}
\newcommand{\resDDMGRADelay}{405}
\newcommand{\resDDMGSGTPR}{90.0\%}

\newcommand{\resDDMGSGDelay}{666}

\newcommand{\resHDDMAGRATPR}{94.0\%}
\newcommand{\resHDDMAGRAFPR}{6.0\%}
\newcommand{\resHDDMAGRADelay}{60}
\newcommand{\resHDDMAGSGTPR}{94.0\%}

\newcommand{\resHDDMAGSGDelay}{168}

\newcommand{\addLocMAEOnset}{1808.2}
\newcommand{\addLocMAEMax}{265.6}

\newcommand{\addLocCloserMax}{9{,}629}
\newcommand{\addLocN}{10{,}000}

\endlocaldefs

\begin{document}
\begin{frontmatter}

\title{When Your Model Stops Working:\\ Anytime-Valid Calibration Monitoring}
\runtitle{Anytime-Valid Calibration Monitoring}

\begin{aug}
\author[A]{\fnms{Tristan}~\snm{Farran}}
\address[A]{University of Amsterdam}
\runauthor{T. Farran}
\end{aug}

\begin{abstract}
Practitioners monitoring deployed probabilistic models face a fundamental trap: any fixed-sample test applied repeatedly over an unbounded stream will eventually raise a false alarm, even when the model remains perfectly stable. Existing methods typically lack formal error guarantees, conflate alarm time with changepoint location, and monitor indirect signals that do not fully characterize calibration. We present PITMonitor, an \emph{anytime-valid} calibration-specific monitor that detects distributional shifts in probability integral transforms via a mixture e-process, providing Type~I error control over an unbounded monitoring horizon as well as Bayesian changepoint estimation. On \texttt{river}'s FriedmanDrift benchmark, PITMonitor achieves detection rates competitive with the strongest baselines across all three scenarios, although detection delay is substantially longer under local drift. Code is available at \url{https://github.com/tristan-farran/pitmon}.
\end{abstract}

\begin{keyword}[class=MSC]
\kwd[Primary ]{62L10}
\kwd[; secondary ]{62G10}
\end{keyword}

\begin{keyword}
\kwd{anytime-valid inference}
\kwd{calibration monitoring}
\kwd{e-values}
\kwd{e-processes}
\kwd{regime shifts}
\kwd{sequential hypothesis testing}
\kwd{changepoint detection}
\kwd{concept drift}
\end{keyword}

\end{frontmatter}

\section{Introduction}

Probabilistic models deployed in production face a fundamental challenge: the world changes. Across domains, from medicine to finance, models encounter regime shifts and concept drift which can cause calibration to degrade drastically, with consequential effects downstream. Modern neural networks, for instance, can improve classification accuracy while simultaneously becoming less calibrated \citep{guo2017calibration}. Practitioners must therefore monitor their models continuously, seeking to answer two questions: \emph{has} the world changed, and if so, \emph{when}? In practice, this monitoring often relies on ad-hoc procedures such as periodic recalibration schedules, rolling-window hypothesis tests, or threshold-based alerts on summary metrics.

These approaches suffer from a fundamental statistical problem: \emph{they do not control the false alarm rate over continuous monitoring}. A practitioner who checks calibration daily with a $p < 0.05$ threshold will, over a year of monitoring, almost certainly observe spurious alarms even if calibration remains stable. Classical hypothesis tests assume a fixed sample size determined before seeing data; continuous monitoring violates this assumption.

More principled alternatives are provided by online drift detectors, such as those implemented in the \texttt{river} library \citep{montiel2021river}. Classical detectors such as DDM and HDDM are lightweight and often effective at detecting abrupt changes, but rely on heuristic thresholds or fixed-sample statistical arguments that do not provide false alarm guarantees under continuous monitoring. ADWIN \citep{bifet2007learning} improves on fixed-window methods by adapting its window size to bound the false alarm probability \emph{within a single window}, but this per-window guarantee does not extend to the stream level: as the stream grows, the number of implicit tests accumulates without bound, inflating the false alarm rate. More fundamentally, these methods typically monitor generic signals such as error rates or mean changes in residuals rather than calibration-specific quantities, with broader notions of miscalibration such as overconfidence with unchanged accuracy potentially going undetected.

We propose PITMonitor, an anytime-valid calibration monitor with five key properties:
\begin{enumerate}
 \item \textbf{Anytime-valid false alarm control:} we prove that $\Prob(\text{ever alarm} \mid H_0) \leq \alpha$ for all time, without requiring a pre-specified monitoring horizon or stopping rule.
 
 \item \textbf{Calibration-specificity:} we target the full calibration relationship between model and data, rather than indirect signals that may miss important aspects of miscalibration.
 
 \item \textbf{No baseline period required:} unlike methods requiring a ``clean'' reference distribution or initialisation window, PITMonitor works from the first observation.

 \item \textbf{Practical efficiency:} the method runs in $O(1)$ per update of the mixture e-process, plus $O(\log n)$ for rank maintenance via a \texttt{SortedList} \citep{sortedcontainers}, with $O(n)$ memory.

 \item \textbf{Distribution-free:} no assumptions are made about the pre- or post-change distributions.
\end{enumerate}

\section{Related Work}

\subsubsection*{Calibration Assessment}
Classical calibration metrics include expected calibration error \citep{guo2017calibration}, reliability diagrams \citep{degroot1983comparison}, and proper scoring rules \citep{gneiting2007strictly}. PITs have been used for static forecast evaluation in econometrics \citep{diebold1998evaluating} and weather forecasting \citep{gneiting2014probabilistic}. These both provide static calibration assessment but do not address sequential monitoring with false alarm control.

\subsubsection*{Distribution Shift Detection}
Methods for detecting distribution shift include two-sample tests \citep{rabanser2019failing} and domain classifiers \citep{lipton2018detecting}. These methods primarily address changes in the input or label distributions, while our work focuses on the \emph{output} side: identifying when predicted probabilities no longer match outcome frequencies.

\subsubsection*{Sequential Calibration Testing}
\citet{arnold2023sequentially} proposed e-values for testing forecast calibration, focusing on whether PITs are $\Unif(0,1)$. PITMonitor differs in two ways. First, we do not assume a specific PIT distribution and instead test whether the PIT distribution remains stable over time, alarming only on \emph{shifts} in calibration rather than on miscalibration itself. Second, we locate \emph{when} calibration changed, rather than only establishing \emph{whether} the model is miscalibrated.

\subsubsection*{E-values}
E-values support anytime-valid hypothesis testing under optional stopping \citep{vovk2021values, ramdas2023game, grunwald2024safe, shin2022detectors}. PITMonitor builds on this framework, constructing a mixture e-process to detect shifts in the PIT distribution and inheriting its anytime-valid false-alarm guarantee (Theorem~\ref{thm:type1}).

\section{Background}

\subsection{Probability Integral Transforms}
\label{pits}
A probabilistic model with predictive CDF $\hat{F}$ for $Y \sim P$ is \emph{perfectly calibrated} if $\hat{F} = P$. \emph{Miscalibration}, meaning divergence of predicted probabilities from outcome frequencies, can manifest as systematic bias, over- or under-confidence, misassigned tail risks, and more. The \emph{probability integral transform} (PIT) $U = \hat{F}(Y)$ unifies all of these into a single object computable from predictions and observations alone.

Since $U$ has distribution $P \circ \hat{F}^{-1}$, it encodes the calibration relationship between the model and data-generating process: every discrepancy between $\hat{F}$ and $P$ leaves a signature in the distribution of $U$. Perfect calibration corresponds to $U \sim \Unif(0,1)$ \citep{dawid1984statistical}, but the shape of the PIT distribution is informative more broadly: a U-shaped density indicates underdispersion, a hump near the center overdispersion, and a triangular density systematic location bias \citep{gneiting2007probabilistic}. Thus, shifts in the PIT distribution indicate changes in calibration, making it a natural monitoring target. We therefore test:
\begin{itemize}
 \item $\mathbf{H_0}$: $U_1, U_2, \ldots \overset{\mathrm{i.i.d.}}{\sim} F$ for some distribution $F$
 \item $\mathbf{H_1}$: There exists a $\tau > 1$ such that $(U_t)_{t < \tau} \overset{\mathrm{i.i.d.}}{\sim} F$ and $(U_t)_{t \ge \tau}$ is not i.i.d.\ from $F$
\end{itemize}
As a result, PITMonitor is able to target any change in the calibration relationship, without alarming on static miscalibration. The null is broad, encompassing both perfectly calibrated models ($F = \Unif(0,1)$) and miscalibrated yet stable ones ($F \neq \Unif(0,1)$), while the alternative captures any change in calibration, whether abrupt or gradual, local or global.

\subsection{Conformal P-values}
\label{sec:pvals}

To detect distribution shifts sequentially, we employ \emph{conformal p-values from ranks} \citep{vovk2005algorithmic}. Given observations $U_1, \ldots, U_t$, define the rank of $U_t$ as:
\begin{equation}
 R_t = \#\{s \leq t : U_s \leq U_t\}
\end{equation}

\begin{proposition}[Rank Uniformity Under $H_0$]\footnote{This result holds more generally under exchangeability of $(U_t)$; we assume i.i.d.\ throughout for simplicity.}
\label{prop:rank}
If $U_1, \ldots, U_t \overset{i.i.d.}{\sim} F$ with $F$ continuous,\footnote{If $F$ is discretized or rounded, randomized tie-breaking can be used to preserve uniformity.} then $R_t \sim \mathrm{Unif}\{1, \ldots, t\}$.
\end{proposition}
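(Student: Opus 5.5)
We prove Proposition~\ref{prop:rank} by an exchangeability (symmetry) argument. No distributional computation involving $F$ is needed.

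\textbf{Step 1: ties have probability zero.} Since $F$ is continuous, $\Prob(U_s = U_{s'}) = 0$ for every $s \neq s'$. This follows by conditioning on $U_{s'}$ and using independence: $\Prob(U_s = u) = 0$ for every $u$, because a continuous $F$ has no atoms. A finite union bound over the $\binom{t}{2}$ pairs then shows that, almost surely, $U_1, \ldots, U_t$ are pairwise distinct. On this event, $R_t = \#\{s \le t : U_s \le U_t\}$ is exactly the position of $U_t$ in the increasing ordering of the sample. In particular $R_t \in \{1,\ldots,t\}$, and the map from $(U_1,\ldots,U_t)$ to its ranking permutation $\pi$ is well defined a.s.

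\textbf{Step 2: symmetry.} Because the $U_s$ are i.i.d., the joint law of $(U_1,\ldots,U_t)$ is invariant under any permutation $\sigma$ of the indices. In fact exchangeability alone suffices here, which is consistent with the footnote. For each $k \in \{1,\ldots,t\}$, let $E_k$ be the event that the observation in index position $k$ has rank $r$. Take the transposition $\sigma$ swapping indices $k$ and $t$. Applying $\sigma$ maps the event $\{U_k \text{ has rank } r\}$ onto the event $\{U_t \text{ has rank } r\}$, and the invariance of the law gives $\Prob(R_t = r) = \Prob(\text{rank of } U_k = r)$ for every $k$.

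\textbf{Step 3: counting.} Off the null tie event, the ranks of $U_1,\ldots,U_t$ form a permutation of $\{1,\ldots,t\}$. So for each fixed $r$, exactly one index has rank $r$, and therefore $\sum_{k=1}^t \mathbf{1}\{\text{rank of } U_k = r\} = 1$ a.s. Taking expectations and using Step 2 gives $t\,\Prob(R_t = r) = 1$, that is, $\Prob(R_t = r) = 1/t$ for each $r$, as claimed.

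\textbf{Main obstacle.} The only delicate point is Step 1, namely justifying rigorously that ties occur with probability zero. Continuity of $F$ is exactly what is needed here. A distribution with atoms would break the argument, which is why the footnote invokes randomized tie-breaking in that case. The permutation-invariance step is otherwise routine.
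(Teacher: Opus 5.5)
Your proof is correct and takes essentially the paper's approach: continuity of $F$ makes ties null, and exchangeability plus counting gives $\Prob(R_t = r) = 1/t$. The only difference is bookkeeping. The paper counts $(t-1)!$ of the $t!$ equally likely orderings, while you use marginal symmetry via transpositions together with the identity $\sum_{k=1}^t \mathbf{1}\{\text{rank of } U_k = r\} = 1$, which spells out the same symmetry argument in more detail.
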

\begin{proof}
Since $F$ is continuous, ties occur with probability zero. By exchangeability, all $t!$ orderings of $(U_1,\ldots,U_t)$ are equally likely. Exactly $(t-1)!$ of these orderings place $U_t$ at rank $r$, giving $\Prob(R_t = r) = 1/t$.
\end{proof}

\noindent Since $R_t$ is discrete, we add an independent jitter $V_t$ to obtain a $p_t \sim \Unif(0,1)$ under $H_0$:
\begin{equation}
 p_t = \frac{R_t - 1 + V_t}{t}, \quad V_t \sim \Unif(0,1)
\end{equation}
Under $H_1$, new observations no longer fall uniformly across the reference sample, making the p-value distribution a useful signal of change. Intuitively, post-change p-values are displaced toward regions where the new law concentrates more mass than $F$.

One subtlety is that this non-uniformity is transient: as we accumulate post-change observations, $p_t$ becomes increasingly uniform even under $H_1$ and the signal erodes. A formal characterization of post-change p-value behaviour is left for future work (Section~\ref{sec:conclusion}).

\subsection{E-values}
An \emph{e-value} is a nonnegative random variable satisfying $\E[e_t \mid \mathcal{F}_{t-1}] \leq 1$ under the null hypothesis. A key property of e-values is that they can be composed multiplicatively while maintaining validity under the null \citep{ramdas2023game}. The resulting product is an \emph{e-process}, accumulating evidence from time $\tau$ onwards:
\begin{equation}
 M_t^{(\tau)} = \prod_{s=\tau}^{t} e_s
\end{equation}
Under the alternative, e-values are constructed so that $\E[e_t \mid \mathcal{F}_{t-1}] > 1$. Therefore,
\begin{equation}
 \E[M_t^{(\tau)} \mid \mathcal{F}_{t-1}] = M_{t-1}^{(\tau)} \cdot \E[e_t \mid \mathcal{F}_{t-1}]
 \begin{cases} \leq M_{t-1}^{(\tau)} & \text{under } H_0, \\ > M_{t-1}^{(\tau)} & \text{under } H_1. \end{cases}
\end{equation}
Hence, $M_t^{(\tau)}$ is a nonnegative supermartingale under $H_0$, yet grows without bound in expectation under $H_1$. This motivates detecting a shift by monitoring whether the process exceeds a threshold. By Ville's inequality \citep{ville1939}, $\Prob(\sup_{t \geq \tau} M_t^{(\tau)} \geq 1/\alpha) \leq \alpha$ under $H_0$, so the threshold $1/\alpha$ controls the false alarm probability at level $\alpha$ over an unbounded horizon (as we formalise in Theorem~\ref{thm:type1}). We therefore raise an alarm at the first time $t$ such that $M_t^{(\tau)} \geq 1/\alpha$.

\section{Method}

\subsection{E-values via Density Betting}
\label{sec:density}

To construct e-values from p-values, we use the betting framework of \citet{ramdas2023game}. At time $t$, before observing $p_t$, we choose a density function $\hat{f}_t$ satisfying $\int_0^1 \hat{f}_t(p)\,dp=1$. Interpreting this as a wager against $H_0$, we pay one unit up front and receive payoff $\hat{f}_t(p_t)$ after $p_t$ is revealed, giving the one-step e-value $e_t = \hat{f}_t(p_t)$.

\begin{proposition}[Density Betting Yields Valid E-values]
\label{prop:density}
Let $\hat{f}: [0,1] \to [0,\infty)$ be a density function satisfying $\int_0^1 \hat{f}(p)\,dp = 1$. If $p \sim \Unif(0,1)$, then $e = \hat{f}(p)$ satisfies $\E[e] = 1$.
\end{proposition}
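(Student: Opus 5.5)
The plan is to compute the expectation directly from the law of the unconditional uniform and then record the conditional version that the e-process construction actually needs. Since $p \sim \Unif(0,1)$, its law has Lebesgue density $\mathbf{1}_{[0,1]}$. The function $\hat{f}$ is nonnegative and measurable, being a density. The law of the unconscious statistician, valid for nonnegative measurable functions without any integrability precondition, then gives
\[
\E[e] = \E[\hat{f}(p)] = \int_0^1 \hat{f}(u)\,du = 1 .
\]
Nonnegativity of $e$ is immediate from $\hat{f} \ge 0$, so $e$ is a legitimate e-value with expectation exactly one.

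The step I expect to need care is the application in the sequential setting. There $\hat{f}_t$ is chosen using past data, so it is $\mathcal{F}_{t-1}$-measurable and random, and the requirement is $\E[\hat{f}_t(p_t)\mid\mathcal{F}_{t-1}] = 1$. I would handle this by arguing that, under $H_0$, $p_t$ is $\Unif(0,1)$ conditionally on $\mathcal{F}_{t-1}$.

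\begin{itemize}
\item Take $\mathcal{F}_{t-1}$ to be generated by the past p-values, or equivalently the past jittered relative ranks.
\item Under exchangeability, the rank $R_t$ of the newest point is uniform on $\{1,\ldots,t\}$ (Proposition~\ref{prop:rank}). It is also independent of the relative ordering of $U_1,\ldots,U_{t-1}$, which determines $R_1,\ldots,R_{t-1}$. This is the standard fact that sequential conformal ranks are independent.
\item The fresh jitter $V_t$ is independent of everything before it.
\end{itemize}

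Together these show that $p_t$ is uniform and independent of $\mathcal{F}_{t-1}$. I would then apply the computation above conditionally, treating $\hat{f}_t$ as fixed given $\mathcal{F}_{t-1}$. Formally this is the freezing lemma for an independent variable: write $\E[g(\hat{f}_t, p_t)\mid\mathcal{F}_{t-1}] = \phi(\hat{f}_t)$ with $\phi(f) = \int_0^1 f$. This yields the conditional equality with 1. That equality is what feeds the supermartingale property used with Ville's inequality.
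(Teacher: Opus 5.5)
Your argument for the stated claim is the same as the paper's one-line computation $\E[\hat f(p)] = \int_0^1 \hat f(u)\,du = 1$; your note that this needs only nonnegativity and measurability of $\hat f$ is a harmless refinement. The conditional extension you add goes beyond what the paper proves here but is correct, and your choice of $\mathcal{F}_{t-1}$ as the filtration of past p-values (given raw PITs, $p_t$ is not uniform) is exactly what makes the paper's later use of this proposition with $\mathcal{F}_{t-1}$-measurable $\hat f_t$ legitimate.
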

\begin{proof}
Since $p \sim \Unif(0,1)$, $\E[e] = \E[\hat{f}(p)] = \int_0^1 \hat{f}(p)\,dp = 1$.
\end{proof}

\noindent To bet on p-values adaptively, we use a plug-in histogram density estimator which places mass proportional to where past p-values concentrated (for the broader plug-in lineage, see \citet{fedorova2012plugin}). Letting $c_b$ count past p-values in bin $b$ and $B$ be the number of bins:
\begin{equation}
 \hat{f}(p) = B \cdot \frac{c_b}{\sum_j c_j} \quad \text{for } p \in \text{bin } b
\end{equation}
The histogram is initialized with pseudocounts $c_b = 1$ for all $b$, ensuring $\hat{f}_t$ is a valid density from the first observation, and updated \emph{after} computing $e_t$, so that $\hat{f}_t$ depends only on $p_1, \ldots, p_{t-1}$ and is therefore $\mathcal{F}_{t-1}$-measurable. Since p-values are $\Unif(0,1)$ under $H_0$ by Proposition~\ref{prop:rank}, this produces a valid e-value by Proposition~\ref{prop:density}.

Under $H_1$, the distribution shifts and p-values become non-uniform (Section \ref{sec:pvals}). As these post-change p-values fall more frequently in certain bins, the histogram learns their concentration pattern and bets accordingly, yielding $\E[e_t \mid \mathcal{F}_{t-1}] > 1$, with larger shifts producing faster e-process growth (as formalised under idealised conditions in Proposition~\ref{prop:finite_time}).

\begin{proposition}[Finite-Time Mean Gain]
\label{prop:finite_time}
Let $\tau < t$ be a changepoint and define

\begin{itemize}
 \item $A_b := \#\{s < \tau : p_s \in \mathrm{bin}\,b\}$, \quad $m := \sum A_b = \tau - 1$ \hfill (pre-shift counts),
 \item $C_b := \#\{\tau \le s < t : p_s \in \mathrm{bin}\,b\}$, \quad $n := \sum C_b = t - \tau$ \hfill (post-shift counts),
 \item $\hat q_b := (1+A_b)/(B+m)$ \hfill (pre-shift density).
\end{itemize}
Assume post-shift p-values are i.i.d.\ with bin probabilities
$\theta=(\theta_1,\ldots,\theta_B)$. Define
$$
\gamma := B\sum_{b=1}^B \theta_b \hat q_b - 1,
\qquad
\delta := \sum_{b=1}^B \theta_b^2 - \frac{1}{B},
$$
where $\gamma$ measures the alignment of $\theta$ with the pre-shift prior (with $\gamma=0$ if $\hat q_b=1/B$ for all $b$, i.e.\ a uniform prior), and $\delta$ measures shift intensity (with $\delta=0$ iff $\theta \sim \Unif(0,1)$). Conditioning on pre-shift counts,

\begin{equation}
\label{eq:finite_gain}
\E[e_t\mid A] = 1 + \frac{B+m}{B+m+n}\,\gamma + \frac{n}{B+m+n}\,B\delta.
\end{equation}
\end{proposition}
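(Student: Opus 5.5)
Write the histogram at time $t$ explicitly: after the $B$ pseudocounts, $m$ pre-shift and $n$ post-shift p-values, the count in bin $b$ is $1 + A_b + C_b$ and the total is $B+m+n$. Hence the bet is
\[
\hat f_t(p) = B\,\frac{1+A_b+C_b}{B+m+n}, \qquad p\in\mathrm{bin}\,b .
\]
The plan is a direct computation of $\E[e_t\mid A]$ by iterated conditioning. First condition on $A$ and on the post-shift counts $C$; then average over $C$ given $A$.

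\textbf{Step 1.} Condition on $(A,C)$. The new p-value $p_t$ is post-shift. By the i.i.d.\ assumption it is independent of $C$ and falls in bin $b$ with probability $\theta_b$. Therefore
\[
\E[e_t\mid A,C] = \frac{B}{B+m+n}\sum_b \theta_b\,(1+A_b+C_b).
\]
I also need $p_t$ to be independent of the pre-shift counts $A$. This follows from the idealised model: post-shift p-values are i.i.d.\ with law $\theta$, independent of the past. I state this reading explicitly, since it is what licenses the conditioning on $A$.

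\textbf{Step 2.} Average over $C$ given $A$. The $n$ post-shift p-values $p_\tau,\dots,p_{t-1}$ are i.i.d.\ with law $\theta$, so $C\sim\mathrm{Multinomial}(n,\theta)$ independently of $A$, and $\E[C_b\mid A]=n\theta_b$. Substituting gives
\[
\E[e_t\mid A] = \frac{B}{B+m+n}\Bigl(\sum_b\theta_b(1+A_b) + n\sum_b\theta_b^2\Bigr).
\]

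\textbf{Step 3.} Rewrite in terms of $\gamma$ and $\delta$. For the pre-shift term, use $1+A_b=(B+m)\hat q_b$. Then
\[
B\sum_b\theta_b(1+A_b) = (B+m)(1+\gamma).
\]
For the post-shift term,
\[
Bn\sum_b\theta_b^2 = n + nB\delta.
\]
Adding these and dividing by $B+m+n$, the constant parts combine as $(B+m+n)/(B+m+n)=1$. What remains is exactly \eqref{eq:finite_gain}.

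There is no real analytic obstacle here. The only delicate point is being precise about which independence assumptions the "idealised conditions" supply. The computation needs both $p_t$ and the post-shift p-values before it to be independent of the pre-shift history, with common bin law $\theta$. In reality conformal p-values depend on the whole past through the ranks, so I would state this independence as part of the hypothesis rather than try to derive it.
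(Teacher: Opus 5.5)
Your proof is correct and follows the paper's argument essentially step for step. You write $e_t = B(1+A_{b_t}+C_{b_t})/(B+m+n)$, use that $p_t$ lands in bin $b$ with probability $\theta_b$ independently of the past, substitute $\E[C_b\mid A]=n\theta_b$, and regroup via $1+A_b=(B+m)\hat q_b$ and $\sum_b\theta_b^2=\delta+1/B$; your iterated conditioning on $(A,C)$ and explicit independence assumptions simply make precise what the paper's proof uses implicitly under its idealised i.i.d.\ post-shift model.
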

\begin{proof}
The estimator and e-value are
$$
\hat\theta_{t,b} = \frac{1 + A_b + C_b}{B + m + n},
\qquad
e_t = B\hat\theta_{t,b_t}.
$$
Since $p_t$ independently falls in bin $b$ with probability $\theta_b$,
$$
\E[e_t \mid A] = \frac{B}{B+m+n} \sum_{b=1}^B \theta_b \, \E[1 + A_b + C_b \mid A].
$$
Under the i.i.d.\ post-shift model, $\E[C_b \mid A] = n\theta_b$, so
$$
\E[e_t \mid A] = \frac{B}{B+m+n} \Bigl[\underbrace{\sum_{b=1}^B \theta_b(1+A_b)}_{(B+m)(\gamma+1)/B} + n \underbrace{\sum_{b=1}^B \theta_b^2}_{\delta + 1/B}\Bigr].
$$
Expanding and collecting gives \eqref{eq:finite_gain}.
\end{proof}

\noindent Proposition~\ref{prop:finite_time} shows the mean gain interpolating between two regimes: for small $n$ the prior-alignment term $\gamma$ dominates because the histogram has not yet learned the post-change distribution and bets according to the pre-shift pattern, while as $n\to\infty$ the data weight approaches one and $\E[e_t\mid A]\to 1+B\delta$, with the strength of the shift determining the growth of the e-process. The transition speed is governed by $n$ relative to $B+m$, with longer pre-shift histories delaying convergence, as the histogram is more entrenched in the pre-shift pattern.

When $\gamma \ge 0$, the prior is aligned and $\E[e_t\mid A]>1$ immediately; if $\gamma<0$, the expected e-value initially falls below 1. In this case, crossing $1$ requires
\begin{equation}
 n > n^* := \frac{(B+m)|\gamma|}{B\delta}
\end{equation}
This warmup length grows with the prior misalignment $|\gamma|$ and decreases with shift intensity $\delta$ (a larger shift overwhelms a bad prior faster). Since $|\gamma|\le m/(B+m)$, the worst-case warmup is bounded by $m/(B\delta)$, meaning longer null histories require more post-change observations before the e-process begins to grow, but this bound shrinks as the shift intensity $\delta$ increases.

\noindent One important caveat is that Proposition~\ref{prop:finite_time} assumes post-change p-values are i.i.d.\ with fixed bin probabilities, but as discussed in Section~\ref{sec:pvals}, p-values drift back toward uniformity over time, and since the histogram is updated with these same p-values, it learns a progressively weaker signal. Detection must therefore occur within a finite window before the signal erodes. Nonetheless, the qualitative conclusions are useful: under $H_1$, the e-process grows after a warmup period whose length decreases with shift intensity and increases with null history length, with prior alignment yielding immediate growth before the histogram has adapted.

\subsection{The Mixture E-process}
\label{sec:method_mixture}

A key challenge in constructing our e-process is that the true changepoint time is unknown. An e-process starting at $\tau$ would be sensitive to drift beginning at $\tau$ but would miss earlier changes, while one starting too early would accumulate excess noise that dilutes its power. Since a weighted mixture of e-processes is itself a valid e-process by linearity of expectation, we address the unknown changepoint time by maintaining one over all possible start times:
\begin{equation}
 M_t = \sum_{\tau=1}^{t} w_\tau \cdot M_t^{(\tau)}
\end{equation}
We use $w_\tau = 1/(\tau(\tau+1))$, as this sequence satisfies $\sum_{\tau=1}^{\infty} w_\tau = 1$ exactly, enabling an efficient $O(1)$ recursion that avoids maintaining a separate e-process for each $\tau$.
\begin{proposition}[Efficient Recursion]
The mixture e-process satisfies:
\begin{equation}
 M_t = e_t \cdot (M_{t-1} + w_t)
\end{equation}
\end{proposition}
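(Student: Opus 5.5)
The plan is a direct algebraic verification from the definitions $M_t^{(\tau)} = \prod_{s=\tau}^t e_s$ and $M_t = \sum_{\tau=1}^t w_\tau M_t^{(\tau)}$, with the convention $M_0 = 0$. The recursion does not depend on the particular weights $w_\tau = 1/(\tau(\tau+1))$. That choice only matters for validity, since it gives $\sum_\tau w_\tau = 1$.

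The first step is to split the sum defining $M_t$ into the terms with $\tau \le t-1$ and the single term $\tau = t$. For $\tau \le t-1$, the product factorises as $M_t^{(\tau)} = e_t \prod_{s=\tau}^{t-1} e_s = e_t\, M_{t-1}^{(\tau)}$. For $\tau = t$, we have $M_t^{(t)} = e_t$ directly.

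Substituting these gives
\begin{equation*}
M_t = e_t \sum_{\tau=1}^{t-1} w_\tau M_{t-1}^{(\tau)} + w_t e_t = e_t\bigl(M_{t-1} + w_t\bigr).
\end{equation*}
Here we recognise the remaining sum as $M_{t-1}$. For $t=1$ that sum is empty, and the identity reads $M_1 = e_1 w_1$, which agrees with the definition.

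There is no genuine obstacle. The only point needing care is the boundary bookkeeping: fixing the convention $M_0 = 0$, and noting that $e_t$ multiplies every term, including the newly added start time $\tau = t$. This is why the update is $O(1)$ in time and memory, since only the scalar $M_{t-1}$ has to be stored.
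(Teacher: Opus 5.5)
Your proof is correct and is essentially the paper's argument. You split off the $\tau = t$ term, use $M_t^{(\tau)} = e_t M_{t-1}^{(\tau)}$ for $\tau \le t-1$, factor out $e_t$, and recognise the remaining sum as $M_{t-1}$, with the same convention $M_0 = 0$ covering $t=1$.
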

\begin{proof}
The recursion is initialized with $M_0 = 0$. For $t \ge 1$, expand the definition:
\begin{align*}
M_t &= \sum_{\tau=1}^{t} w_\tau \cdot M_t^{(\tau)} \\
&= \sum_{\tau=1}^{t-1} w_\tau \cdot e_t \cdot M_{t-1}^{(\tau)} + w_t \cdot e_t \\
&= e_t \left( \sum_{\tau=1}^{t-1} w_\tau \cdot M_{t-1}^{(\tau)} + w_t \right) \\
&= e_t (M_{t-1} + w_t)
\end{align*}
\end{proof}

\noindent The weights also govern detection sensitivity; since $M_t \ge w_\tau M_t^{(\tau)}$ for any $\tau \le t$, the mixture fires an alarm whenever any component $M_t^{(\tau)}$ reaches the scaled threshold $\tau(\tau+1)/\alpha$, with processes starting at later times requiring proportionally stronger evidence to trigger an alarm.

\subsection{Type~I Error Control}

\begin{theorem}[Anytime-Valid False Alarm Control]
\label{thm:type1}
Under $H_0$, PITMonitor satisfies:
\begin{equation}
 \Prob\left(\sup_{t \geq 1} M_t \geq \frac{1}{\alpha}\right) \leq \alpha
\end{equation}
\end{theorem}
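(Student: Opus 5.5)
The plan is to show that the mixture $M_t$ is a nonnegative supermartingale with respect to the filtration $\mathcal{F}_t=\sigma(U_1,\ldots,U_t,V_1,\ldots,V_t)$ and starts at mass at most $1$. Ville's inequality then gives the claim. The cleanest way is to shift the time index so the process starts at one: set $N_t := M_t + \sum_{\tau > t} w_\tau$, which satisfies $N_0 = \sum_{\tau\ge1} w_\tau = 1$. Equivalently, we may interpret $M^{(\tau)}_t := 1$ for $t<\tau$. Then $N_t = \sum_{\tau\ge 1} w_\tau M^{(\tau)}_t$, and $N_t \ge M_t$.

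The first key step establishes that, under $H_0$, the p-values $p_1,p_2,\ldots$ are i.i.d.\ $\Unif(0,1)$. In particular, $p_t$ is independent of $\mathcal{F}_{t-1}$. Proposition~\ref{prop:rank} only gives the marginal uniformity of $R_t$, and this is where I expect the main obstacle. I would argue directly that, conditional on the unordered multiset $\{U_1,\ldots,U_t\}$, exchangeability makes $U_t$ equally likely to be any of its elements. Hence $R_t$ is uniform on $\{1,\ldots,t\}$ conditionally on that multiset. The past ranks $R_1,\ldots,R_{t-1}$ are functions of the ordering of $U_1,\ldots,U_{t-1}$ only, and that ordering is uniform and independent of which element is last. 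So $R_t$ is independent of $(R_1,\ldots,R_{t-1})$. Adding independent jitter $V_t$ then yields $p_t \sim \Unif(0,1)$ independent of $\mathcal{F}_{t-1}$; this is the standard conformal fact from \citet{vovk2005algorithmic}. A subtlety is that $\mathcal{F}_{t-1}$ should be taken as generated by the p-values, not the raw $U$'s, since the $U$'s reveal the order statistics. The histogram $\hat f_t$ depends only on $p_1,\ldots,p_{t-1}$, so this filtration suffices.

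Next, since $\hat f_t$ is $\mathcal{F}_{t-1}$-measurable and is a valid density (pseudocounts ensure this), Proposition~\ref{prop:density} applied conditionally gives $\E[e_t\mid\mathcal{F}_{t-1}]=1$. For each $\tau$, the process $M_t^{(\tau)}$ is then a nonnegative martingale (with the convention above). By monotone convergence or Tonelli for the nonnegative countable sum, we get $\E[N_t\mid\mathcal{F}_{t-1}] = \sum_\tau w_\tau M^{(\tau)}_{t-1} = N_{t-1}$. The recursion gives the same result more directly: $\E[M_t\mid\mathcal{F}_{t-1}] = M_{t-1}+w_t$, so $N_t = M_t + 1 - \sum_{\tau\le t} w_\tau$ is a martingale with $N_0=1$.

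Finally, Ville's inequality for the nonnegative (super)martingale $N_t$ with $N_0=1$ gives $\Prob(\sup_t N_t \ge 1/\alpha)\le \alpha$. Since $M_t\le N_t$, the event $\{\sup_t M_t\ge 1/\alpha\}$ is contained in $\{\sup_t N_t \ge 1/\alpha\}$, which proves the theorem.
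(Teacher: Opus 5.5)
Your proposal is correct and follows the paper's proof essentially step for step: the paper also sets each $M_t^{(\tau)}$ equal to $1$ for $t<\tau$ and shows that $\widetilde M_t = M_t + \sum_{\tau>t} w_\tau = M_t + 1/(t+1)$ is a nonnegative supermartingale with $\widetilde M_0=1$ dominating $M_t$, then applies Ville's inequality. Your extra argument is a genuine improvement. You show that $p_t$ is uniform and independent of $p_1,\ldots,p_{t-1}$ via independence of sequential ranks, whereas the paper infers conditional validity from the merely marginal Propositions~\ref{prop:rank} and~\ref{prop:density}. Just replace your opening $\mathcal{F}_t=\sigma(U_1,\ldots,U_t,V_1,\ldots,V_t)$ with $\sigma(p_1,\ldots,p_t)$, as your own later remark requires: given $U_1,\ldots,U_{t-1}$, the law of $R_t$ is governed by the spacings $F(U_{(k+1)})-F(U_{(k)})$ and is not uniform.
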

\begin{proof}
The mixture $M_t = \sum_{\tau=1}^{t} w_\tau M_t^{(\tau)}$ is defined with $M_t^{(\tau)}$ only for $\tau \leq t$.

\noindent To apply Ville's inequality, we extend each process to all $t \geq 0$ by defining, for each $\tau \geq 1$:
\begin{equation*}
\widetilde{M}_t^{(\tau)} =
\begin{cases}
1, & t < \tau,\\[2mm]
\prod_{s=\tau}^{t} e_s, & t \ge \tau
\end{cases}
\end{equation*}
Each $(\widetilde{M}_t^{(\tau)})_{t \ge 0}$ is a nonnegative supermartingale with $\widetilde{M}_0^{(\tau)} = 1$ (since $0 < \tau$ for all $\tau \ge 1$). Define the full mixture over all $\tau \ge 1$:
$$
\widetilde{M}_t = \sum_{\tau=1}^{\infty} w_\tau\,\widetilde{M}_t^{(\tau)}
$$
At $t=0$, since $\sum_{\tau=1}^{\infty} w_\tau = 1$, we have $\widetilde{M}_0 = 1$. Since each $w_\tau \widetilde{M}_t^{(\tau)}$ is nonnegative, we get
\begin{align*}
\E[\widetilde{M}_t \mid \mathcal{F}_{t-1}]
&= \sum_{\tau=1}^{\infty} w_\tau\, \E[\widetilde{M}_t^{(\tau)} \mid \mathcal{F}_{t-1}]
\le \sum_{\tau=1}^{\infty} w_\tau\, \widetilde{M}_{t-1}^{(\tau)}
= \widetilde{M}_{t-1}
\end{align*}
so $(\widetilde{M}_t)_{t \ge 0}$ is a nonnegative supermartingale with $\widetilde{M}_0 = 1$. Since $\widetilde{M}_t^{(\tau)} = 1$ for $\tau > t$,
\begin{align*}
\widetilde{M}_t
&= \sum_{\tau=1}^{t} w_\tau \prod_{s=\tau}^{t} e_s
 + \sum_{\tau=t+1}^{\infty} w_\tau
= M_t + \sum_{\tau=t+1}^{\infty}\frac{1}{\tau(\tau+1)} = M_t + \frac{1}{t+1}
\end{align*}
Since $\frac{1}{t+1} > 0$ for all $t \geq 1$, we have $M_t \le \widetilde{M}_t$, and therefore 
$$
\sup_{t \ge 1} M_t \le \sup_{t \ge 1} \widetilde{M}_t
$$
Applying Ville's inequality,
\begin{equation*}
\Prob\!\left(\sup_{t\ge1} M_t \ge \frac{1}{\alpha}\right)
\le
\Prob\!\left(\sup_{t\ge1} \widetilde{M}_t \ge \frac{1}{\alpha}\right)
\le \alpha. \qedhere
\end{equation*}
\end{proof}

\subsection{Changepoint Estimation}
\label{sec:changepoint}

After an alarm at time $T$, we estimate the changepoint location by selecting the split that best explains the post-split p-values as non-uniform. For each candidate $s \in \{1, \ldots, T-1\}$, we evaluate the segment $(p_{s+1}, \ldots, p_T)$ under two hypotheses, analogous to those in Section~\ref{pits}:
\begin{itemize}
 \item $\mathbf{H_0^{(s)}}$: $p_t \sim \mathrm{Unif}(0,1)$, 
 i.e.\ bin probabilities $\theta = (1/B, \ldots, 1/B)$
 \item $\mathbf{H_1^{(s)}}$: bin probabilities are unknown, with a $\mathrm{Dir}(\kappa, \ldots, \kappa)$ prior
\end{itemize}
A segment beginning at a true changepoint should be composed of p-values that are largely non-uniform, so $H_1$ should explain them better than $H_0$ as quantified by the log Bayes factor:
\begin{equation}
 \log\mathrm{BF}(s) = \log p(\mathbf{c}^{(s)} \mid H_1^{(s)}) - \log p(\mathbf{c}^{(s)} \mid H_0^{(s)})
\end{equation}
where $\mathbf{c}^{(s)} = (c_1^{(s)}, \ldots, c_B^{(s)})$ are the histogram counts for the segment under consideration. We set $\kappa = 1/2$, giving the canonical uninformative prior for the multinomial \citep{jeffreys1961}. The estimated changepoint is then identified simply as the split with the strongest evidence:
\begin{equation}
 \hat{\tau} = \argmax_s \log\mathrm{BF}(s)
\end{equation}

\section{Experiments}
\label{sec:experiments}

We evaluate PITMonitor on FriedmanDrift, a synthetic regression stream designed for controlled evaluation of drift detection methods, comparing against all seven \texttt{river} detectors. We report TPR, FPR, and detection delay for all methods, as well as changepoint estimation error for PITMonitor only, across three qualitatively distinct shift scenarios and \expNtrials{} trials.

\subsection{Setup}

\paragraph{Scenarios.}
Each scenario shares a common pre-drift distribution but differs post-onset:
\begin{itemize}
 \item \textbf{GRA} (Global Recurring Abrupt): An abrupt shift affects all features at the onset, representing a sudden global regime change. By default, GRA includes a second changepoint after which the pre-drift distribution recurs, but since all detectors halt after their first alarm it is never reached in practice; we therefore place it beyond the simulation window.
 \item \textbf{GSG} (Global Slow Gradual): A gradual shift affects all features with a 500-sample transition window during which samples are drawn from both the pre- and post-drift distributions with equal probability. As with GRA, the second changepoint is removed.
 \item \textbf{LEA} (Local Expanding Abrupt): Three abrupt changepoints are spaced evenly across the post-drift window, each expanding the affected regions of the input space. Since PITMonitor is designed around a single-changepoint assumption, LEA serves as a stress test to determine whether the method remains useful when that assumption is violated.
\end{itemize}

\paragraph{Stream layout.}
We generate a single training stream of \expNtrain{} pre-drift samples as well as \expNtrials{} independent test streams, each containing \expNstable{} pre-drift and \expNpost{} post-drift samples.

\paragraph{Predictive model.}
We train a feedforward neural network outputting a Gaussian predictive distribution $\mathcal{N}(\mu_t, \sigma_t^2)$ for each input. The network has 3 hidden layers of 128 units with SiLU activations. Inputs and targets are standardized using per-feature means and standard deviations fitted on the training set. Training uses mini-batches of size 256, the Adam optimizer with initial learning rate $3 \times 10^{-4}$, a cosine annealing learning rate schedule, and \expEpochs{} epochs. The model achieves $R^2 = 0.96$ on a held-out pre-drift test set, with an expected calibration error (ECE) of 0.01, confirming it is well-specified and calibrated before monitoring begins.

\paragraph{PIT construction.}
For each monitoring sample $(x_t, y_t)$ the PIT is:
\begin{equation}
 U_t = \Phi\!\left(\frac{y_t - \mu_t}{\sigma_t}\right)
\end{equation}
with $\mu_t$, $\sigma_t$ being the predicted mean and standard deviation and $\Phi$ the standard normal CDF.

\paragraph{Detector settings.}
All \texttt{river} baselines are run with library-default parameters, reflecting an out-of-the-box comparison that avoids detector-specific tuning. PITMonitor is run with significance level $\alpha = \expAlpha$ and $B = \expNbins$ bins. The four binary-input baselines (DDM, EDDM, HDDM\_A, HDDM\_W) require a binary error signal; we binarize via $b_t = \mathbf{1}[|r_t| > \phi]$ with $\phi$ being the median of $|r_t|$ on the training stream, the most assumption-free choice given that these methods were designed for classification and no domain guidance exists for thresholding.

\subsection{Results}
\begin{table}[t]
\centering
\small
\caption{Drift detection results on FriedmanDrift.}
\label{tab:results}
\begin{tabular}{l ccc ccc ccc}
\toprule
\multicolumn{1}{l}{} & \multicolumn{3}{c}{GRA} & \multicolumn{3}{c}{GSG} & \multicolumn{3}{c}{LEA} \\
\cmidrule(lr){2-4} \cmidrule(lr){5-7} \cmidrule(lr){8-10}
Method & TPR & FPR & Delay & TPR & FPR & Delay & TPR & FPR & Delay \\
\midrule
PITMonitor & 96.2\% & 3.8\% & 77 & 96.2\% & 3.8\% & 189 & 96.2\% & 3.8\% & 1919 \\
ADWIN & 99.1\% & 0.9\% & 27 & 99.1\% & 0.9\% & 27 & 99.1\% & 0.9\% & 115 \\
KSWIN & 2.9\% & 97.1\% & 16 & 2.8\% & 97.2\% & 172 & 2.8\% & 97.2\% & 656 \\
PageHinkley & 0.4\% & 99.6\% & 1 & 0.4\% & 99.6\% & 6 & 0.4\% & 99.6\% & 70 \\
DDM & 90.8\% & 9.2\% & 405 & 90.0\% & 9.2\% & 666 & 22.0\% & 9.2\% & 2050 \\
EDDM & 8.7\% & 91.3\% & 344 & 8.7\% & 91.3\% & 562 & 1.6\% & 91.3\% & 1759 \\
HDDM\_A & 94.0\% & 6.0\% & 60 & 94.0\% & 6.0\% & 168 & 52.5\% & 6.0\% & 2032 \\
HDDM\_W & 9.4\% & 90.5\% & 15 & 9.4\% & 90.5\% & 46 & 9.4\% & 90.5\% & 586 \\
\bottomrule
\end{tabular}
\end{table}

Table~\ref{tab:results} presents the full results; Figure~\ref{fig:detection_rates} visualizes TPR and FPR per method and scenario; Figure~\ref{fig:single_run} shows a representative single-run monitoring trace for the GSG scenario; Figure~\ref{fig:delay_dists} shows detection delay distributions across all detectors; and Figure~\ref{fig:cp_error} shows changepoint estimation error distributions for PITMonitor, which uniquely provides a changepoint estimate beyond the alarm time. Since null streams are drawn from the same pre-drift distribution across~all~the scenarios, each detector's FPR is near-identical across scenarios by construction.

\begin{figure}[b!]
\centering
\includegraphics[width=\linewidth]{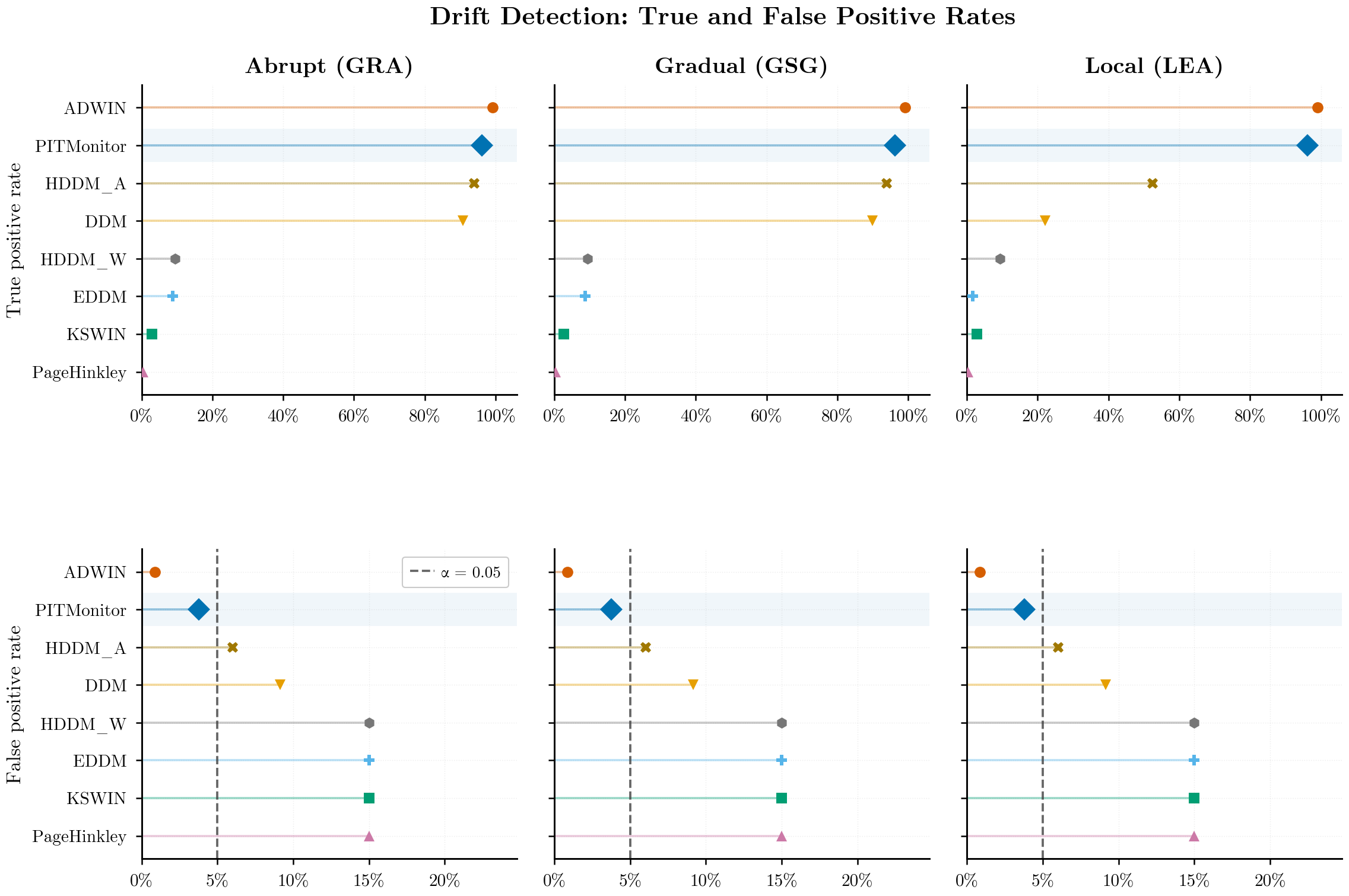}
\caption{TPR and FPR across all detectors and drift scenarios. The dashed line marks the nominal $\alpha = \expAlpha$.}
\label{fig:detection_rates}
\end{figure}

\begin{figure}[t!]
\centering
\includegraphics[width=\linewidth]{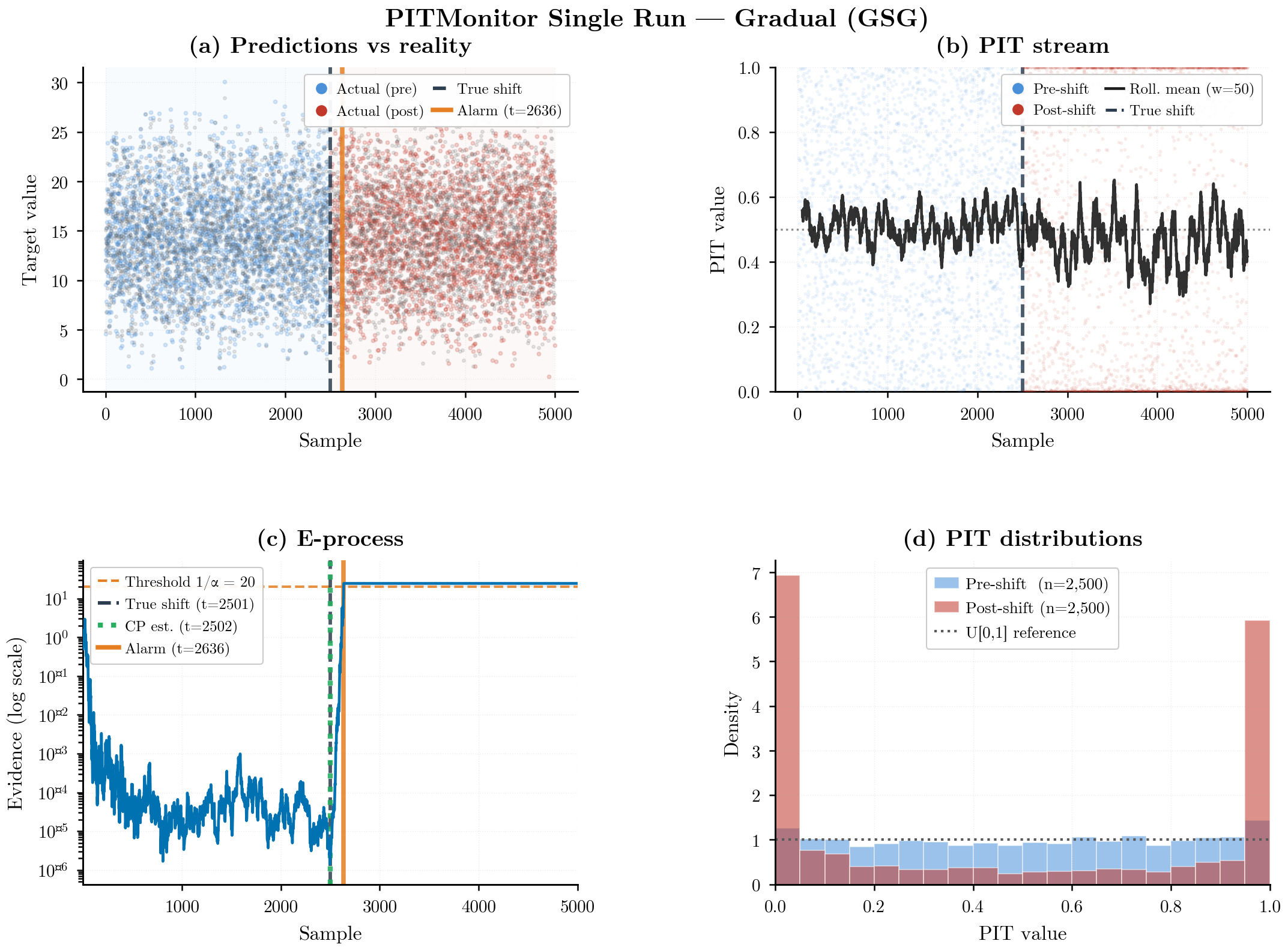}
\caption{Single-run PITMonitor trace. (a) Predicted vs.\ actual values. (b) PIT stream with a rolling mean. (c) Mixture e-process on a log scale, with a dashed line marking the threshold $1/\alpha$. (d) Pre-shift and post-shift PIT histograms.}
\label{fig:single_run}
\end{figure}

\paragraph{Type~I error control.}
PITMonitor achieves FPR of \resPMGRAFPR{} across all scenarios, consistent with the nominal $\alpha = \expAlpha$ guarantee in Theorem~\ref{thm:type1}. In this benchmark, ADWIN obtains a \resADGRAFPR{} empirical FPR on the \expNstable-sample null window, though this number is specific to the chosen horizon and dataset. Among the other baselines, DDM and HDDM\_A achieve \resDDMGRAFPR{} and \resHDDMAGRAFPR{} respectively, while KSWIN, PageHinkley, EDDM, and HDDM\_W all exceed $89\%$.

\paragraph{Global recurring abrupt drift (GRA).}
On GRA, PITMonitor achieves \resPMGRATPR{} TPR with a mean delay of \resPMGRADelay{} samples. ADWIN attains higher TPR (\resADGRATPR{}) and shorter delay (\resADGRADelay{}), though its FPR remains an empirical estimate tied to this finite monitoring window, while DDM and HDDM\_A reach a TPR of \resDDMGRATPR{} and \resHDDMAGRATPR{}, with a delay of \resDDMGRADelay{} and \resHDDMAGRADelay{} samples respectively.

\paragraph{Global slow gradual drift (GSG).}
On GSG, PITMonitor achieves \resPMGSGTPR{} TPR with a mean delay of \resPMGSGDelay{} samples. ADWIN again attains higher TPR (\resADGSGTPR{}) and shorter delay (\resADGSGDelay{}), while DDM and HDDM\_A reach \resDDMGSGTPR{} and \resHDDMAGSGTPR{} TPR, with a delay of \resDDMGSGDelay{} and \resHDDMAGSGDelay{} respectively.

\paragraph{Local expanding drift (LEA).}
On LEA, PITMonitor maintains \resPMLEATPR{} TPR with a mean delay of \resPMLEADelay{} samples. The increased delay is consistent with the expanding-drift structure: early phases induce weaker PIT distortion, so evidence accumulates slowly until later expansion phases push the e-process past the threshold. ADWIN detects earlier and with higher accuracy (\resADLEADelay{} sample delay, \resADLEATPR{} TPR), but no other baseline achieves a TPR above $52.5\%$.

\begin{figure}[t!]
\centering
\includegraphics[width=\linewidth]{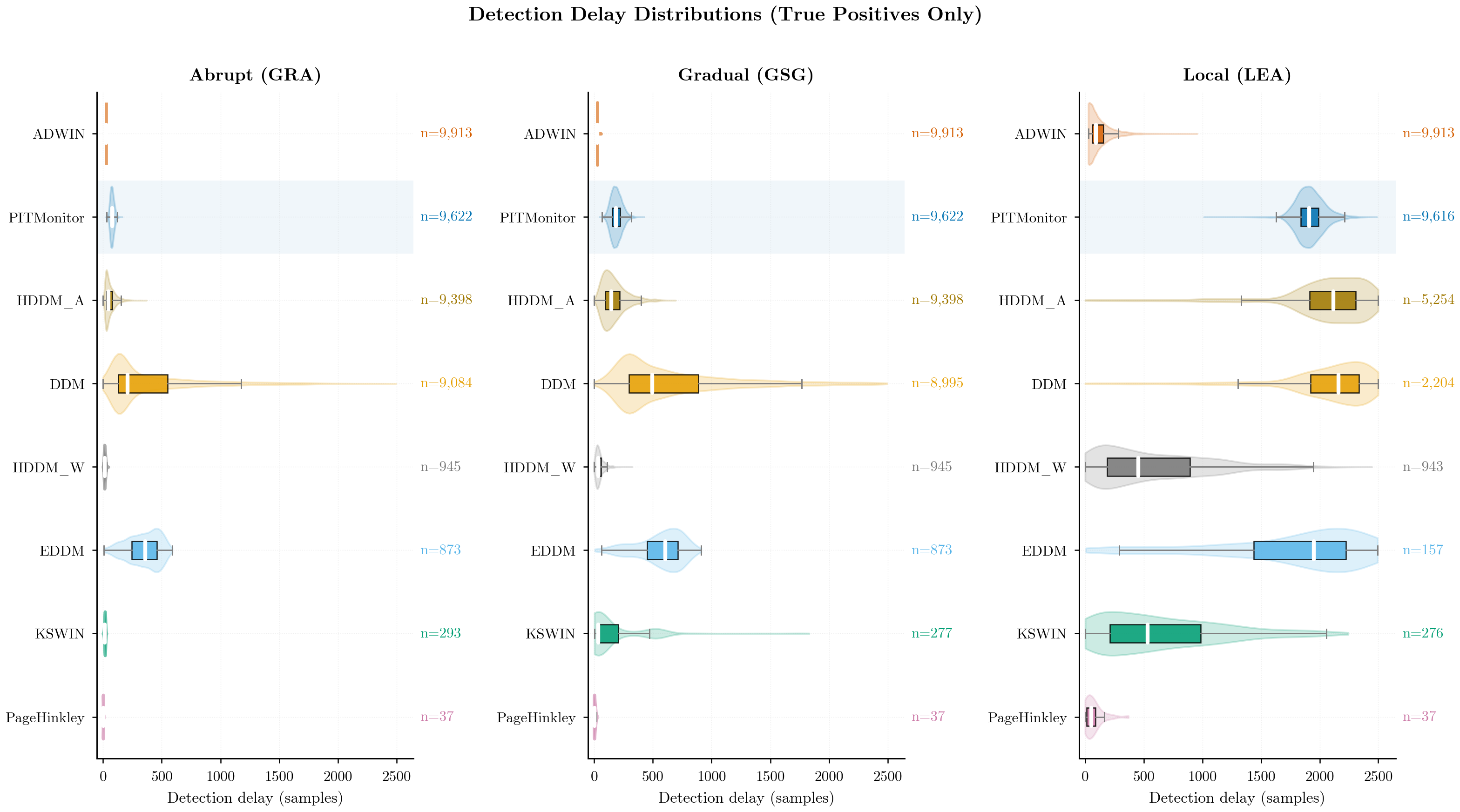}
\caption{Detection delay distributions. White bars mark the median; colored text indicates sample count.}
\label{fig:delay_dists}
\end{figure}

\paragraph{Detection delays.}
Figure~\ref{fig:delay_dists} reports delay distributions over true-positive trials only. Accordingly, it should be interpreted jointly with TPR; a detector with low TPR may have a tight delay distribution simply because it detects only the easiest cases. PITMonitor's delay distribution is tightest under GRA, where the shift is most immediate, and widest under LEA, where the shift is weakest at onset and evidence accumulates slowly. The other detectors show similar patterns, with ADWIN achieving the fastest detection among the detectors with a TPR > 90\%.

\begin{figure}[b!]
\centering
\includegraphics[width=\linewidth]{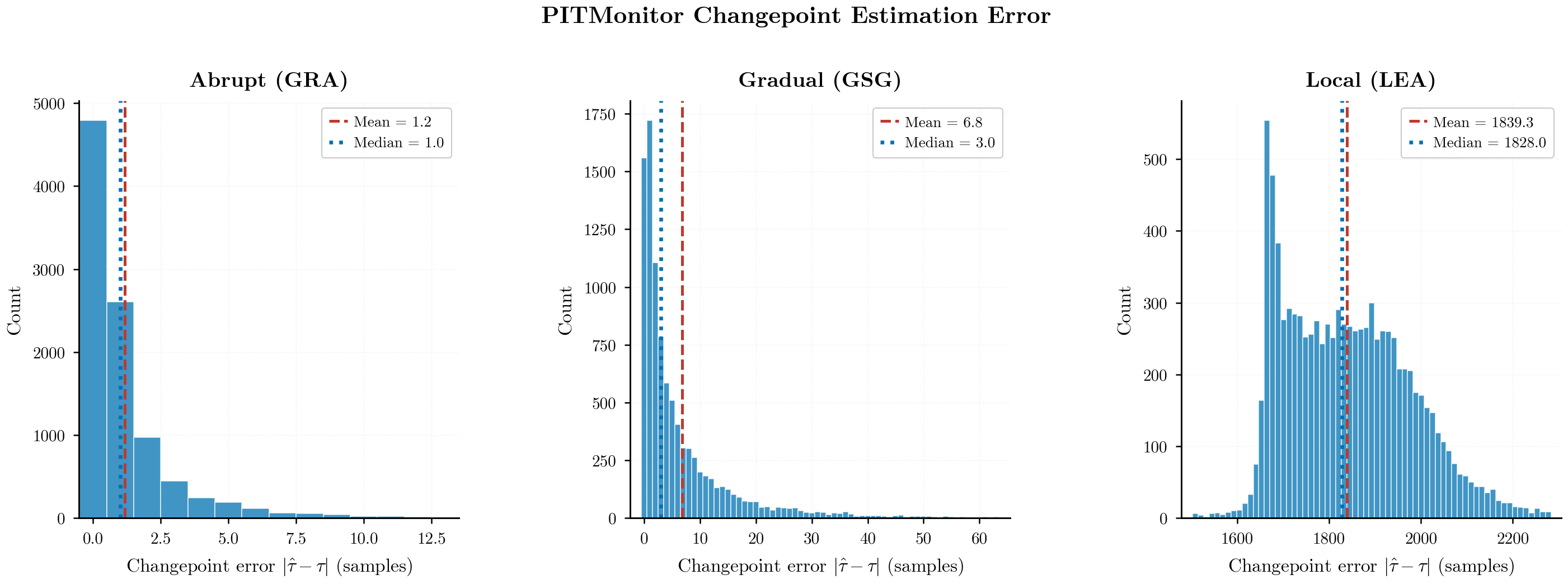}
\caption{Distribution of PITMonitor changepoint estimation error $|\hat\tau - \tau|$ across true-positive trials.}
\label{fig:cp_error}
\end{figure}

\paragraph{Changepoint localization.}
On GRA and GSG, mean absolute error is \resPMGRACPErr{} and \resPMGSGCPErr{} samples, indicating accurate localization under global drift. On LEA, mean error rises to \resPMLEACPErr{}, as the expanding multiple transition point structure violates the single-changepoint assumption. The changepoint estimator selects the split with strongest post-split non-uniformity, which typically falls in last phase rather than at initial onset. MAE here therefore reflects a mismatch to a single reference onset rather than a failure to identify the operationally relevant transition.

\section{Discussion}

\paragraph{Scope.}
PITMonitor targets continuous monitoring of deployed probabilistic models where false alarms are costly, monitoring is open-ended, and calibration drift specifically is operationally important. For one-time calibration checks or accuracy-only monitoring without false-alarm guarantees or changepoint localization, standard diagnostics remain appropriate.

\paragraph{Limitations.}\label{par:limitations}

\begin{enumerate}\setlength{\itemsep}{5.5pt}
 \item \textit{Expanding drift.} PITMonitor detects LEA drift with \resPMLEATPR{} TPR but \resPMLEADelay{} mean delay, reflecting slower evidence accumulation under weaker partial shifts. Changepoint localization is also less precise under multi-phase expansion since the estimator targets the strongest distributional change, which may lag the true onset (see Appendix~\ref{app:additional_checks}). Although LEA violates our single-changepoint assumption, PITMonitor identifies the dominant shift when multiple exist, which may be useful for operational monitoring. Multiple changepoint localization is an important direction for future work (Section~\ref{sec:conclusion}).

 \item \textit{Calibration improvements.} Because PITMonitor detects any shift in the PIT distribution, it does not distinguish between deterioration and improvement in calibration; a model that becomes better calibrated mid-stream may also trigger an alarm. The direction of drift can however be assessed by inspecting the PIT histogram following an alarm. Unlike the \texttt{river} baselines, which expose only a binary flag, PITMonitor's post-alarm histogram also provides a direct and interpretable link between the alarm and the underlying distributional change. Automatically distinguishing between improvements and deteriorations is left open and is an important direction for future work (Section~\ref{sec:conclusion}).

 \item \textit{Stationarity assumption.} PITMonitor's validity relies on PITs being i.i.d.\ under the null, so non-stationarity in the data stream may inflate the long-run false alarm rate. Our FriedmanDrift experiments did not reveal elevated empirical FPR, but this benchmark is stationary pre-drift by design; further evaluation on non-stationary streams is needed.
 
 \item \textit{Late changepoints after long null periods.} Proposition~\ref{prop:finite_time} shows that the warmup length scales with the pre-shift history length. A changepoint occurring after a long stable period therefore faces a deeply entrenched histogram and a decaying p-value signal (Section~\ref{sec:pvals}), potentially causing the e-process to grow too slowly to alarm before the signal erodes. In such settings, a smaller $B$ or periodic histogram reset may be effective.

 \item \textit{Persistence requirement.} PITMonitor reacts to accumulated evidence rather than single-step anomalies, so brief, rapidly reversing, or fast-moving deviations may yield delayed or absent alarms. PITMonitor is therefore best suited for persistent drift inducing a sustained shift in the PIT distribution. Pairing it with a faster residual-based detector can provide early warning of transient shifts, at the cost of weaker false-alarm guarantees.
\end{enumerate}

\paragraph{Parameter settings.}
The number of bins $B$ controls the bias-variance tradeoff in the density estimator: smaller $B$ yields more stable but slower-adapting estimates, while larger $B$ adapts faster at the cost of higher variance, and should be scaled with the expected number of monitoring samples. During the early monitoring phase (roughly the first $B$ observations) the histogram is dominated by pseudocounts, so e-values remain near 1 regardless of the data and sensitivity is low; practitioners expecting drift immediately after deployment may prefer a smaller $B$ or a warm-start period before monitoring begins. The significance level $\alpha$ should similarly be set against long-run false-alarm tolerance rather than treated as a fixed convention.

\section{Conclusion}
\label{sec:conclusion}

We presented PITMonitor, an anytime-valid method for monitoring the calibration of deployed probabilistic models. By detecting distributional shifts in probability integral transforms via a mixture e-process, it provides formal Type~I error control over unbounded monitoring as well as Bayesian changepoint estimation, properties that distinguish it from existing drift detectors.

On the three FriedmanDrift scenarios, PITMonitor attains competitive TPR on global drift while holding empirical FPR under the target level. On local expanding drift, detection delay is substantially longer and changepoint localization less accurate: the former reflects slower evidence accumulation under weaker partial shifts; the latter reflects the estimator targeting the dominant shift rather than the true onset when the single-changepoint assumption is violated.

Beyond numerical performance, PITMonitor's most distinctive property is its calibration-specificity; by operating on PITs rather than residuals or error rates, it is sensitive to miscalibration that leaves accuracy  unchanged, and the post-alarm histogram provides a direct and interpretable link between an alarm and the nature of the underlying distributional shift.

Several directions remain open. Improving power under partial shifts and extending reliable localization to multiple changepoints would address the two main limitations identified in the experiments. Multivariate outputs can initially be handled by reducing each prediction-outcome pair to a scalar PIT; a fuller treatment via Rosenblatt chains would handle joint miscalibration but requires access to the full conditional factorization. A natural next step is coupling detection with automatic post-alarm recalibration, for example by reweighting the predictive density by the empirical post-alarm PIT density as $\hat{f}(y) \propto p(y)\,\hat{q}(z)^{\lambda(e)}$, where $z = \hat{F}(y)$ and $\lambda(e)$ is a shrinkage weight learned from e-process evidence, tying correction strength directly to miscalibration severity. Distinguishing calibration deterioration from improvement would allow practitioners to act selectively on alarms, though the direction of drift is already partially recoverable from the post-alarm PIT histogram. Finally, a precise characterization of the behaviour of p-values under $H_1$ remains an open theoretical question.

\paragraph{Code Availability.} \url{https://github.com/tristan-farran/pitmon}.

\appendix
\section{Multi-changepoint localization check.}
\label{app:additional_checks}
In a staged stream in which drift intensity increases across three phases, PITMonitor localizes consistently to the highest-intensity phase rather than initial onset. MAE to the highest-intensity phase is \addLocMAEMax{} versus \addLocMAEOnset{} to onset, with \addLocCloserMax{} of \addLocN{} runs closer to the highest-intensity phase. This supports the interpretation that under multi-phase drift, the estimator tends to identify the dominant regime shift rather than the earliest detectable change.

\bibliographystyle{imsart-nameyear}
\bibliography{references}

\begin{thebibliography}{21}

\bibitem[\protect\citeauthoryear{Arnold, Henzi and Ziegel}{2023}]{arnold2023sequentially}
\begin{barticle}[author]
\bauthor{\bsnm{Arnold},~\bfnm{Sebastian}\binits{S.}}, \bauthor{\bsnm{Henzi},~\bfnm{Alexander}\binits{A.}} \AND \bauthor{\bsnm{Ziegel},~\bfnm{Johanna~F.}\binits{J.~F.}}
(\byear{2023}).
\btitle{Sequentially valid tests for forecast calibration}.
\bjournal{Annals of Applied Statistics}
\bvolume{17}.
\end{barticle}
\endbibitem

\bibitem[\protect\citeauthoryear{Bifet and Gavald{\`a}}{2007}]{bifet2007learning}
\begin{binproceedings}[author]
\bauthor{\bsnm{Bifet},~\bfnm{Albert}\binits{A.}} \AND \bauthor{\bsnm{Gavald{\`a}},~\bfnm{Ricard}\binits{R.}}
(\byear{2007}).
\btitle{Learning from time-changing data with adaptive windowing}.
In \bbooktitle{Proceedings of the 7th SIAM International Conference on Data Mining}.
\bpublisher{Society for Industrial and Applied Mathematics}.
\end{binproceedings}
\endbibitem

\bibitem[\protect\citeauthoryear{Dawid}{1984}]{dawid1984statistical}
\begin{barticle}[author]
\bauthor{\bsnm{Dawid},~\bfnm{A.~Philip}\binits{A.~P.}}
(\byear{1984}).
\btitle{Statistical theory: The prequential approach}.
\bjournal{Journal of the Royal Statistical Society: Series A}
\bvolume{147}.
\end{barticle}
\endbibitem

\bibitem[\protect\citeauthoryear{DeGroot and Fienberg}{1983}]{degroot1983comparison}
\begin{barticle}[author]
\bauthor{\bsnm{DeGroot},~\bfnm{Morris~H.}\binits{M.~H.}} \AND \bauthor{\bsnm{Fienberg},~\bfnm{Stephen~E.}\binits{S.~E.}}
(\byear{1983}).
\btitle{The comparison and evaluation of forecasters}.
\bjournal{Journal of the Royal Statistical Society: Series D}
\bvolume{32}.
\end{barticle}
\endbibitem

\bibitem[\protect\citeauthoryear{Diebold, Gunther and Tay}{1998}]{diebold1998evaluating}
\begin{barticle}[author]
\bauthor{\bsnm{Diebold},~\bfnm{Francis~X.}\binits{F.~X.}}, \bauthor{\bsnm{Gunther},~\bfnm{Todd~A.}\binits{T.~A.}} \AND \bauthor{\bsnm{Tay},~\bfnm{Anthony~S.}\binits{A.~S.}}
(\byear{1998}).
\btitle{Evaluating density forecasts with applications to financial risk management}.
\bjournal{International Economic Review}
\bvolume{39}.
\end{barticle}
\endbibitem

\bibitem[\protect\citeauthoryear{Fedorova et~al.}{2012}]{fedorova2012plugin}
\begin{binproceedings}[author]
\bauthor{\bsnm{Fedorova},~\bfnm{Valentina}\binits{V.}}, \bauthor{\bsnm{Gammerman},~\bfnm{Alex}\binits{A.}}, \bauthor{\bsnm{Nouretdinov},~\bfnm{Ilia}\binits{I.}} \AND \bauthor{\bsnm{Vovk},~\bfnm{Vladimir}\binits{V.}}
(\byear{2012}).
\btitle{Plug-in martingales for testing exchangeability on-line}.
In \bbooktitle{Proc. ICML}.
\end{binproceedings}
\endbibitem

\bibitem[\protect\citeauthoryear{Gneiting, Balabdaoui and Raftery}{2007}]{gneiting2007probabilistic}
\begin{barticle}[author]
\bauthor{\bsnm{Gneiting},~\bfnm{Tilmann}\binits{T.}}, \bauthor{\bsnm{Balabdaoui},~\bfnm{Fadoua}\binits{F.}} \AND \bauthor{\bsnm{Raftery},~\bfnm{Adrian~E.}\binits{A.~E.}}
(\byear{2007}).
\btitle{Probabilistic Forecasts, Calibration and Sharpness}.
\bjournal{Journal of the Royal Statistical Society Series B: Statistical Methodology}
\bvolume{69}.
\end{barticle}
\endbibitem

\bibitem[\protect\citeauthoryear{Gneiting and Katzfuss}{2014}]{gneiting2014probabilistic}
\begin{barticle}[author]
\bauthor{\bsnm{Gneiting},~\bfnm{Tilmann}\binits{T.}} \AND \bauthor{\bsnm{Katzfuss},~\bfnm{Matthias}\binits{M.}}
(\byear{2014}).
\btitle{Probabilistic forecasting}.
\bjournal{Annual Review of Statistics and Its Application}
\bvolume{1}.
\end{barticle}
\endbibitem

\bibitem[\protect\citeauthoryear{Gneiting and Raftery}{2007}]{gneiting2007strictly}
\begin{barticle}[author]
\bauthor{\bsnm{Gneiting},~\bfnm{Tilmann}\binits{T.}} \AND \bauthor{\bsnm{Raftery},~\bfnm{Adrian~E.}\binits{A.~E.}}
(\byear{2007}).
\btitle{Strictly proper scoring rules, prediction, and estimation}.
\bjournal{Journal of the American Statistical Association}
\bvolume{102}.
\end{barticle}
\endbibitem

\bibitem[\protect\citeauthoryear{Gr{\"u}nwald, de~Heide and Koolen}{2024}]{grunwald2024safe}
\begin{barticle}[author]
\bauthor{\bsnm{Gr{\"u}nwald},~\bfnm{Peter}\binits{P.}}, \bauthor{\bparticle{de} \bsnm{Heide},~\bfnm{Rianne}\binits{R.}} \AND \bauthor{\bsnm{Koolen},~\bfnm{Wouter~M.}\binits{W.~M.}}
(\byear{2024}).
\btitle{Safe testing}.
\bjournal{Journal of the Royal Statistical Society: Series B}
\bvolume{86}.
\end{barticle}
\endbibitem

\bibitem[\protect\citeauthoryear{Guo et~al.}{2017}]{guo2017calibration}
\begin{binproceedings}[author]
\bauthor{\bsnm{Guo},~\bfnm{Chuan}\binits{C.}}, \bauthor{\bsnm{Pleiss},~\bfnm{Geoff}\binits{G.}}, \bauthor{\bsnm{Sun},~\bfnm{Yu}\binits{Y.}} \AND \bauthor{\bsnm{Weinberger},~\bfnm{Kilian~Q.}\binits{K.~Q.}}
(\byear{2017}).
\btitle{On calibration of modern neural networks}.
In \bbooktitle{Proc. ICML}.
\end{binproceedings}
\endbibitem

\bibitem[\protect\citeauthoryear{Jeffreys}{1961}]{jeffreys1961}
\begin{bbook}[author]
\bauthor{\bsnm{Jeffreys},~\bfnm{Harold}\binits{H.}}
(\byear{1961}).
\btitle{Theory of Probability},
\bedition{3} ed.
\bpublisher{Oxford University Press}.
\end{bbook}
\endbibitem

\bibitem[\protect\citeauthoryear{Jenks}{2014}]{sortedcontainers}
\begin{bmisc}[author]
\bauthor{\bsnm{Jenks},~\bfnm{Grant}\binits{G.}}
(\byear{2014}).
\btitle{{sortedcontainers}: Sorted Containers}.
\end{bmisc}
\endbibitem

\bibitem[\protect\citeauthoryear{Lipton, Wang and Smola}{2018}]{lipton2018detecting}
\begin{binproceedings}[author]
\bauthor{\bsnm{Lipton},~\bfnm{Zachary}\binits{Z.}}, \bauthor{\bsnm{Wang},~\bfnm{Yu-Xiang}\binits{Y.-X.}} \AND \bauthor{\bsnm{Smola},~\bfnm{Alex}\binits{A.}}
(\byear{2018}).
\btitle{Detecting and correcting for label shift with black box predictors}.
In \bbooktitle{Proc. ICML}.
\end{binproceedings}
\endbibitem

\bibitem[\protect\citeauthoryear{Montiel and Others}{2021}]{montiel2021river}
\begin{barticle}[author]
\bauthor{\bsnm{Montiel},~\bfnm{Jacob}\binits{J.}} \AND \bauthor{\bsnm{Others}}
(\byear{2021}).
\btitle{River: Machine learning for streaming data in {P}ython}.
\bjournal{Journal of Machine Learning Research}
\bvolume{22}.
\end{barticle}
\endbibitem

\bibitem[\protect\citeauthoryear{Rabanser, G{\"u}nnemann and Lipton}{2019}]{rabanser2019failing}
\begin{binproceedings}[author]
\bauthor{\bsnm{Rabanser},~\bfnm{Stephan}\binits{S.}}, \bauthor{\bsnm{G{\"u}nnemann},~\bfnm{Stephan}\binits{S.}} \AND \bauthor{\bsnm{Lipton},~\bfnm{Zachary}\binits{Z.}}
(\byear{2019}).
\btitle{Failing loudly: An empirical study of methods for detecting dataset shift}.
In \bbooktitle{Advances in Neural Information Processing Systems (NeurIPS)}
\bvolume{32}.
\end{binproceedings}
\endbibitem

\bibitem[\protect\citeauthoryear{Ramdas et~al.}{2023}]{ramdas2023game}
\begin{barticle}[author]
\bauthor{\bsnm{Ramdas},~\bfnm{Aaditya}\binits{A.}}, \bauthor{\bsnm{Gr{\"u}nwald},~\bfnm{Peter}\binits{P.}}, \bauthor{\bsnm{Vovk},~\bfnm{Vladimir}\binits{V.}} \AND \bauthor{\bsnm{Shafer},~\bfnm{Glenn}\binits{G.}}
(\byear{2023}).
\btitle{Game-theoretic statistics and safe anytime-valid inference}.
\bjournal{Statistical Science}
\bvolume{38}.
\end{barticle}
\endbibitem

\bibitem[\protect\citeauthoryear{Shin, Ramdas and Rinaldo}{2024}]{shin2022detectors}
\begin{barticle}[author]
\bauthor{\bsnm{Shin},~\bfnm{Jaehyeok}\binits{J.}}, \bauthor{\bsnm{Ramdas},~\bfnm{Aaditya}\binits{A.}} \AND \bauthor{\bsnm{Rinaldo},~\bfnm{Alessandro}\binits{A.}}
(\byear{2024}).
\btitle{E-detectors: A nonparametric framework for sequential change detection}.
\bjournal{The New England Journal of Statistics in Data Science}
\bvolume{2}.
\end{barticle}
\endbibitem

\bibitem[\protect\citeauthoryear{Ville}{1939}]{ville1939}
\begin{bbook}[author]
\bauthor{\bsnm{Ville},~\bfnm{Jean}\binits{J.}}
(\byear{1939}).
\btitle{{\'E}tude Critique de la Notion de Collectif}.
\bpublisher{Gauthier-Villars}, \baddress{Paris}.
\end{bbook}
\endbibitem

\bibitem[\protect\citeauthoryear{Vovk, Gammerman and Shafer}{2005}]{vovk2005algorithmic}
\begin{bbook}[author]
\bauthor{\bsnm{Vovk},~\bfnm{Vladimir}\binits{V.}}, \bauthor{\bsnm{Gammerman},~\bfnm{Alex}\binits{A.}} \AND \bauthor{\bsnm{Shafer},~\bfnm{Glenn}\binits{G.}}
(\byear{2005}).
\btitle{Algorithmic Learning in a Random World}.
\bpublisher{Springer}.
\end{bbook}
\endbibitem

\bibitem[\protect\citeauthoryear{Vovk and Wang}{2021}]{vovk2021values}
\begin{barticle}[author]
\bauthor{\bsnm{Vovk},~\bfnm{Vladimir}\binits{V.}} \AND \bauthor{\bsnm{Wang},~\bfnm{Ruodu}\binits{R.}}
(\byear{2021}).
\btitle{E-values: Calibration, combination, and applications}.
\bjournal{Annals of Statistics}
\bvolume{49}.
\end{barticle}
\endbibitem

\end{thebibliography}

\end{document}